\definecolor{webgreen}{rgb}{0,.5,0}
\definecolor{webbrown}{rgb}{.6,0,0}
\definecolor{myred}{RGB}{0,0,0}
\DeclareMathOperator{\scc}{sc}
\newcommand{\seqnum}[1]{\href{https://oeis.org/#1}{\underline{#1}}}
\newcommand{\abs}[1]{\left| #1 \right|}
\newcommand{\MA}{{\mathcal A}}
\newcommand{\mn}[1]{\overline{#1}}
\theoremstyle{plain}
\newtheorem{thm}{Theorem}
\newtheorem{lem}[thm]{Lemma}
\newtheorem{prop}[thm]{Proposition}
\newtheorem{cor}{Corollary}
\theoremstyle{definition}
\newtheorem{defn}{Definition}
\theoremstyle{remark}
\newtheorem*{rem}{Remark}
\begin{document}

\title{Maximal State Complexity and Generalized de Bruijn Words}

\author{Daniel Gabric}
\email[A1]{dgabric@uwaterloo.ca} 

\author{\v{S}t\v{e}p{\'a}n Holub}
\email[A2]{holub@karlin.mff.cuni.cz}

\author{Jeffrey Shallit}
\email[A3]{shallit@uwaterloo.ca}
\address[A1,A3]{
School of Computer Science,
University of Waterloo,
Waterloo, Ontario N2L 3G1,
Canada}
\address[A2]{	
	Department of Algebra,
	Faculty of Mathematics and Physics,
	Charles University,
	Prague,
	Czech Republic}

\begin{abstract}
	We compute the exact maximum state complexity for the language
	consisting of $m$ words of length $N$, and characterize languages achieving the maximum. We also consider a special case, namely languages $C(w)$ consisting of the conjugates of a single word $w$. The words for which the maximum state complexity of $C(w)$ is achieved turn out to be a natural generalization of de Bruijn words. We show that generalized de Bruijn words exist for each length and consider the number of them.
\end{abstract}

\maketitle

\section{Introduction}

Let $x, y$ be words.  We say $x$ and $y$ are {\it conjugates\/} if one is a 
cyclic shift of the other; equivalently if there exist words $u,v$
such that $x = uv$ and $y = vu$.  For example, the English words
{\tt listen} and {\tt enlist} are conjugates.

The set of all conjugates of a word $w$ is denoted by $C(w)$.  Thus,
for example, $C({\tt eat}) = \{ {\tt eat, tea, ate} \}$.  
We also write $C(L)$ for the set of all conjugates of elements
of the language $L$.

For a regular language $L$ let $\scc(L)$ denote the {\it state complexity\/}
of $L$:  the number of states in the smallest complete DFA accepting
$L$.   State complexity is sometimes also called {\it quotient complexity}
\cite{Brzozowski:2010}.   The state complexity of the cyclic shift operation
$L \rightarrow C(L)$ for arbitrary regular languages
$L$ was studied in Maslov's pioneering 1970 paper \cite{Maslov:1970}.
More recently, 
Jir\'askov\'a and Okhotin \cite{Jiraskova&Okhotin:2008} improved Maslov's bound,
and Jir\'asek and Jir\'askov\'a
studied the state complexity of the conjugates of prefix-free languages
\cite{Jirasek&Jiraskova:2013}.

In this note we investigate the state complexity of \emph{uniform-length} languages, that is, of languages $L\subseteq \Sigma^N$. The language $C(w)$ is a special case of a uniform-length language.   Clearly $\scc(C(x))$ achieves its minimum --- namely, $N+2$ --- at words of the form $a^N$, where $a$ is a single letter.  
By considering random words, it seems likely that
$\scc(C(x)) = \Theta(N^2)$ in the worst case.

In Theorem~\ref{thm_complexity}, we prove an exact bound   for the state complexity of (almost all) uniform-length languages and characterize languages that attain the bound. In particular, this means that we determine the state complexity of cyclic shift on languages consisting of a single word. Moreover, the characterization of words $w$ for which $C(w)$ achieves the maximum turns out to be a natural generalization of de Bruijn words to words of arbitrary length. Therefore, in Section \ref{gendeb}, we introduce the concept of \emph{generalized de Bruijn word} and show that such words exist for each length. 

This paper is the journal version of the conference
paper \cite{Gabric&Holub&Shallit:2019}.  It differs in several respects
from that paper:  we have reworked the discussion of the necessary
concepts from graph theory (in Section~\ref{gendeb}), providing more
details; we have characterized uniform-length languages achieving maximum state complexity in Theorem~\ref{thm_complexity} which includes a corrected statement of Theorem 3 of the conference paper; and we have provided
additional enumeration details in Tables 1 and 4.

\section{Generalized de Bruijn words}
\label{gendeb}

De Bruijn words (also called de Bruijn sequences) have a long history \cite{Flye.Sainte-Marie:1894,Martin:1934,Good:1946,deBruijn:1946,deBruijn:1975}, and have been extremely
well studied \cite{Fredricksen:1982,Ralston:1982}.  Let $\Sigma_k$ denote the $k$-letter alphabet
$\{ 0, 1, \ldots, k-1 \}$.  
Traditionally, there are two distinct ways of thinking about these
words:  for integers $k \geq 2$, $n \geq 1$ they are
\begin{itemize}
\item[(a)] the words $w$  
having each word of length $n$
over $\Sigma_k$ exactly once as a factor; or

\item[(b)] the words $w$ 
having each word of length $n$
over $\Sigma_k$ exactly once as a factor, when
$w$ is considered as a ``circular word'', or ``necklace'', 
 where the word ``wraps around'' at the end back to the beginning.
\end{itemize}

For example, for $k = 2$ and $n = 4$, the word
$$ 0000111101100101000$$
is an example of the first interpretation and
$$ 0000111101100101$$ 
is an example of the second.

In this paper, we are concerned with the second (circular)
interpretation of de Bruijn words.  
Obviously, 
such words exist only for lengths of the form $k^n$.
Is there a sensible way to generalize this class of words
so that one could speak fruitfully of (generalized) de Bruijn words
of every length?

One natural way to do so is to use the notion of {\it subword complexity} 
(also called {\it factor complexity\/} or just {\it complexity}).
For $0 \leq i \leq N$ let
$\gamma_i(w)$ denote the number of distinct length-$i$ factors of the word
$w \in\Sigma_k^N$ (considered circularly).   For all words $w$, there is a natural upper bound 
on $\gamma_i (w)$ for $0 \leq i \leq N$, as follows:
\begin{equation}
\gamma_i (w) \leq \min(k^i, N).
\label{bnd}
\end{equation}
This is immediate, since there are at most $k^i$ words of length $i$
over $\Sigma_k$, and there are at most $N$ positions where a word
could begin in $w$ (considered circularly).

Ordinary de Bruijn words are then precisely those words $w$ of length $k^n$ for which
$\gamma_n (w) = k^n$.  But even more is true: a de Bruijn word  $w$ also
attains the upper bound in \eqref{bnd} for {\it all\/} $i \leq k^n$.
To see this, note that if $i \leq n$, then every word of length $i$
occurs as a prefix of some word of length $n$, and every word of
length $n$ is guaranteed to appear in $w$.   On the other hand,
all $k^n$ (circular) factors of each length $i \geq n$ are distinct, because their
length-$n$ prefixes are all distinct.

This motivates the following
definition:
\begin{defn}
A word $w$ of length $N$ over a $k$-letter alphabet is said to be a 
{\it generalized de Bruijn word} if $\gamma_i (w) = \min(k^i, N)$ for all $0 \leq i \leq N$.
\label{gubdef}
\end{defn}

Table~\ref{tab1} gives the lexicographically least de Bruijn words
for a two-letter alphabet, for lengths $1$ to $31$, and the number
of such words (counted up to cyclic shift).  This
forms sequence \seqnum{A317586} in the 
{\it On-Line Encyclopedia of Integer Sequences} (OEIS)
\cite{Sloane}.  

We point out an alternative characterization of our generalized
de Bruijn words.   

\begin{prop}\label{prop1}
A word $w \in \Sigma_k^N$ is a generalized de Bruijn word iff both
of the following hold:
\begin{enumerate}
\item  $\gamma_r (w) = k^r$; and \label{a}
\item  $\gamma_{r+1} (w) = N$, \label{b}
\end{enumerate}
where $r = \lfloor \log_k N \rfloor$.
\label{proptwo}
\end{prop}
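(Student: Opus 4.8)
The plan is to prove both implications, noting first that the choice $r = \lfloor \log_k N\rfloor$ is exactly what makes $k^r \le N < k^{r+1}$, so that $\min(k^r,N) = k^r$ and $\min(k^{r+1},N) = N$. The forward direction is then immediate: if $w$ is a generalized de Bruijn word, then conditions \ref{a} and \ref{b} are just the two instances $i = r$ and $i = r+1$ of the defining equality $\gamma_i(w) = \min(k^i,N)$.

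For the converse, I would assume \ref{a} and \ref{b} and split the range $0 \le i \le N$ at $r$. First I would handle the short factors, $0 \le i \le r$. Condition \ref{a} says that all $k^r$ words of length $r$ occur as (circular) factors of $w$. Since every word of length $i \le r$ is a prefix of some word of length $r$, and a prefix of an occurring factor is itself an occurring factor, all $k^i$ words of length $i$ occur; combined with the upper bound \eqref{bnd} this gives $\gamma_i(w) = k^i = \min(k^i,N)$.

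Next I would treat the long factors, $r+1 \le i \le N$. Condition \ref{b} says $\gamma_{r+1}(w) = N$, that is, the $N$ circular factors of length $r+1$ beginning at the $N$ positions of $w$ are pairwise distinct. For any such $i$, two length-$i$ factors that coincided would have identical length-$(r+1)$ prefixes, forcing two length-$(r+1)$ factors to coincide, a contradiction. Hence all $N$ length-$i$ factors are distinct, and since $k^i \ge k^{r+1} > N$ we obtain $\gamma_i(w) = N = \min(k^i,N)$. Because the two ranges together cover every $i$ with $0 \le i \le N$, this shows $w$ is a generalized de Bruijn word.

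The argument is essentially a rerun of the two observations made just before Definition~\ref{gubdef} for ordinary de Bruijn words --- the prefix argument for small $i$ and the distinct-prefix argument for large $i$ --- so I do not anticipate a serious obstacle. The one point that demands care is the bookkeeping of the circular interpretation: identifying ``exactly $N$ distinct factors of length $r+1$'' with ``the factors at all $N$ starting positions are pairwise distinct'', which is precisely what lets distinctness propagate upward to every longer length.
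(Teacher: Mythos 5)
Your proposal is correct and follows exactly the route the paper intends: the paper's proof is only a one-line sketch referring back to the prefix argument (for $i \le r$) and the distinct-prefixes argument (for $i \ge r+1$) given before Definition~\ref{gubdef}, and your write-up simply carries out that same argument in full detail, including the correct handling of the circular interpretation.
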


\begin{proof}
A generalized de Bruijn word trivially has these properties. An argument similar to the discussion before Definition \ref{gubdef} shows that the two properties imply the equality $\gamma_i (w) = \min(k^i, N)$ for all $0 \leq i \leq N$.
\end{proof}

We now show that generalized de Bruijn words exist. Since one of the most powerful tools for studying de Bruijn words are de Bruijn graphs, we shall need some results from (directed) graph theory. Let us first set the terminology. A closed sequence of edges 
 $(v_0,v_1)$, $(v_1,v_2)$, $(v_2,v_3)$, \ldots , $(v_{n-1}, v_0)$ is called a \emph{cycle} (of length $n$) if all vertices $v_0, v_1, \dots, v_{n-1}$ are distinct. If all edges in the sequence are distinct (vertices may repeat), the sequence is called a \emph{circuit} (of length $n$).
A cycle that visits all vertices of a graph is called a \emph{Hamiltonian cycle}. 
A circuit traversing all edges is an \emph{Eulerian circuit}. A directed graph is an \emph{Eulerian graph} if, for each vertex $v$, the number of edges incoming to $v$ is the same as the number
of edges outgoing from $v$. It is well known that each connected component of an Eulerian graph admits an Eulerian circuit. If, for all vertices, the number of incoming edges, as well as the number of outgoing edges is $k$, then
the graph is said to be \emph{regular of degree $2k$}. The \emph{degree} of a vertex is the total number of its incoming and outgoing edges.

A \emph{factor} (more precisely a $2$-\/factor) of a graph is the set of 
vertex-disjoint cycles that together cover all vertices. Note, for example, that a Hamiltonian cycle is a special case of a factor. One of the first published results in graph theory is the following fact, proved in \cite[Claim 9, p.~200]{petersen1891}. (For a more contemporary proof,
see, for example, \cite[Theorem 3.3.9, p.~140]{graphintro}.)
\begin{lem}[Petersen]\label{petersen}
	Let $G$ be a regular graph of degree $2k$. Then the edges of $G$ can be partitioned into $k$ distinct factors.
\end{lem}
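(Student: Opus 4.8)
The plan is to reduce the statement to the standard fact that a $k$-regular bipartite multigraph decomposes into $k$ perfect matchings. Since $G$ is regular of degree $2k$ in the directed sense used here, every vertex has exactly $k$ incoming and $k$ outgoing edges. First I would encode the arc structure of $G$ in an auxiliary bipartite graph $B$ as follows: take two disjoint copies $V^+ = \{v^+ : v \in V(G)\}$ and $V^- = \{v^- : v \in V(G)\}$ of the vertex set, and for every arc $(v,w)$ of $G$ place one edge of $B$ joining $v^+$ to $w^-$, preserving multiplicities so that parallel arcs and loops of $G$ become parallel edges of $B$. Because each vertex of $G$ has out-degree $k$, each $v^+$ has degree $k$ in $B$, and because each vertex has in-degree $k$, each $w^-$ has degree $k$ in $B$. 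Hence $B$ is a $k$-regular bipartite multigraph with parts of equal size.

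Next I would decompose $B$ into $k$ perfect matchings. A $k$-regular bipartite graph satisfies Hall's condition: for any subset $S$ on one side, the $k\abs{S}$ edges incident to $S$ all land in $N(S)$, and each vertex of $N(S)$ can absorb at most $k$ of them, forcing $\abs{N(S)} \geq \abs{S}$. Thus $B$ has a perfect matching $M_1$; deleting $M_1$ leaves a $(k-1)$-regular bipartite graph, and iterating produces edge-disjoint perfect matchings $M_1, \dots, M_k$ whose union is all of $E(B)$. This is exactly K\"onig's edge-colouring theorem specialized to regular bipartite graphs, so I would cite it rather than reprove it.

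Finally I would translate each matching back into $G$. A perfect matching $M_i$ of $B$ selects, for every vertex $v$, exactly one edge at $v^+$ and exactly one edge at $v^-$, so the corresponding arcs of $G$ give each vertex exactly one outgoing and one incoming arc. Such a set of arcs is precisely a permutation of $V(G)$, whose orbits are vertex-disjoint directed cycles covering every vertex --- that is, a factor in the sense defined above, a loop of $G$ appearing as a cycle of length $1$. Since $M_1, \dots, M_k$ partition $E(B)$, the corresponding factors partition $E(G)$ and are pairwise distinct as edge sets. I expect the only genuine content to be the perfect-matching decomposition of $B$; the encoding and decoding are bookkeeping, and the one point requiring care is that loops and parallel arcs are handled correctly (they cause no difficulty once $B$ is a multigraph) and that a perfect matching indeed yields a \emph{spanning} union of vertex-disjoint cycles rather than some more general subgraph.
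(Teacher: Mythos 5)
Your proof is correct, but it is worth noting that the paper does not prove this lemma at all: it states it as Petersen's classical theorem and cites Petersen's 1891 paper and a modern textbook treatment. So any comparison is with the standard literature proof rather than with an argument in the paper. Your route --- splitting each vertex $v$ into an out-copy $v^+$ and an in-copy $v^-$, observing that the resulting bipartite multigraph is $k$-regular, decomposing it into $k$ perfect matchings via Hall/K\"onig, and reading each matching back as a permutation digraph, i.e.\ a spanning union of vertex-disjoint directed cycles --- is exactly the modern proof of the \emph{directed} 2-factor theorem, and it is well matched to the paper's setting, since the paper's definition of ``regular of degree $2k$'' is already the directed one (in-degree and out-degree both equal to $k$). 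Petersen's original theorem concerns undirected $2k$-regular graphs and its usual proof first orients the edges along an Eulerian circuit to reduce to precisely the situation you start from, so your argument is, if anything, the cleaner half of the classical proof specialized to what the paper actually needs. You are also right to flag loops and parallel arcs as the only delicate points: de Bruijn graphs do contain loops (e.g.\ at $a^n$), and the paper's own use of the lemma requires that a loop count as a cycle of length $1$, which your encoding handles correctly once $B$ is taken to be a multigraph. The only thing I would add is an explicit remark that each factor is nonempty (it has exactly $|V(G)|$ edges), so the $k$ factors are genuinely distinct, but this is immediate.
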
 

 The \emph{$k$-ary de Bruijn graph of order $n$}, denoted $G_n^k$, is a directed graph where the vertices are the $k$-ary words of length $n$, and edges join a word $x$ to a word $y$ if
$x = at$ and $y = tb$ for some letters $a, b$ and a word $t$. 
An ordinary de Bruijn word $a_0a_1\cdots a_{k^n-1}$ of length $k^n$ can be represented by the cycle
 $(v_0,v_1)$, $(v_1,v_2)$, $(v_2,v_3)$, \ldots , $(v_{k^n-1}, v_0)$ where $v_i = a_ia_{i+1}\cdots a_{i+n-1}$, indices taken modulo $k^n$.
This establishes a one-to-one correspondence between Hamiltonian cycles of $G_n^k$ and de Bruijn words of length $k^n$. Similarly, there is a one-to-one correspondence between such words and Eulerian circuits in $G_{n-1}^k$ of the form 
$(v'_0,v'_1)$, $(v'_1,v'_2)$, $(v'_2,v'_3)$, \ldots , $(v'_{k^n-1}, v'_0)$ where $v_i'=a_ia_{i+1}\cdots a_{i+n-2}$, indices again taken modulo $k^n$. 
 More generally, edges in $G_{n-1}^k$ are in one-to-one correspondence with vertices of $G_n^k$, where the edge $(at,tb)$ corresponds to the vertex $atb$. Circuits in $G_{n-1}^k$ then correspond to cycles in $G_n^{k}$.

Every vertex of $G_n^k$ has $k$ incoming edges, and $k$ outgoing edges, and therefore $G_n^k$ is a regular graph of degree $2k$. The fact that such a graph is Eulerian yields the existence of ordinary de Bruijn words. By Proposition \ref{prop1}, it also becomes clear that  building a generalized de Bruijn word of length $N= k^n + j$, where $0\leq j\leq (k-1)k^n$, over a $k$-letter alphabet amounts to constructing a circuit of length $N$ in $G_n^k$ that visits every vertex.

The existence of generalized de Bruijn words of any length is almost proved in a paper by
Lempel \cite{Lempel:1971}.   Lempel proved that for all
$k \geq 2$, $n \geq 1$, $N \leq k^{n+1}$,
there exists a circular word $w = w(k,n,N)$ of length $N$ 
for which the factors of size $n$ are distinct.  
(Also see \cite{Hemmati&Costello:1978,Etzion:1986}.) In other words, Lempel shows the existence of a connected Eulerian graph with $N$ edges in $G_n^k$.
However, his proof does not explicitly state
that the circuit visits all vertices if $k^n \leq N$.  
The resulting word therefore satisfies condition \eqref{b}
of Proposition \ref{prop1}, but not necessarily condition \eqref{a}.
For example, the binary word $10011110000$ of length $11$ contains $11$ distinct circular factors of length $4$, but only $7$ factors of length $3$: the factor $101$ is missing (see Figure~\ref{path}).

A further analysis of Lempel's construction nevertheless reveals that this additional required property is satisfied.
For sake of completeness, we reconstruct the proof below. In fact, our proof more closely follows the proof by Yoeli \cite{Yoeli:1962} for the binary case, which, in turn, was followed by Lempel.
(A similar analysis of Yoeli's proof
in the binary setting can be found in \cite{Shallit:1993}.)

 The core of the proof are the following two facts about de Bruijn graphs.

\begin{lem}\label{complete}
  Let $k\geq 2$ and $n\geq 1$.	Then every cycle in $G_n^k$ can be completed to a factor.
\end{lem}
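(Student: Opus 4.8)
The plan is to recast a factor (2-factor) of $G_n^k$ as a permutation and then to show that the permutation prescribed on the vertices of $C$ extends to a full one. Recall that a factor is a spanning set of vertex-disjoint cycles, i.e.\ a spanning subdigraph in which every vertex has in-degree and out-degree exactly $1$. Such a subdigraph is the same thing as a \emph{permutation} $f$ of the vertex set $\Sigma_k^n$ with the property that $(v,f(v))$ is an edge of $G_n^k$ for every $v$: its cycle decomposition is exactly the required collection of vertex-disjoint cycles. A cycle $C=(v_0,v_1)$, $(v_1,v_2)$, \ldots, $(v_{m-1},v_0)$ prescribes the partial map $v_i\mapsto v_{i+1}$ on $V(C)$, and ``completing $C$ to a factor'' means extending this partial map to such a permutation $f$ on all of $\Sigma_k^n$ while leaving the values already fixed on $V(C)$ untouched.

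First I would record the block structure of the edge set of $G_n^k$. An edge runs from $a_1\cdots a_n$ to $a_2\cdots a_n b$, so it is determined by the length-$(n-1)$ connector $\beta=a_2\cdots a_n$ together with the first letter $a_1$ of the source and the last letter $b$ of the target. Consequently, for each $\beta\in\Sigma_k^{n-1}$ the edges whose source has suffix $\beta$ and whose target has prefix $\beta$ form a complete bipartite graph $K_{k,k}$ between the $k$ vertices $\{x\beta:x\in\Sigma_k\}$ and the $k$ vertices $\{\beta y:y\in\Sigma_k\}$; every edge of $G_n^k$ lies in exactly one such block, and on each side the blocks partition the vertex set (a vertex is a source in the block of its suffix and a target in the block of its prefix). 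Hence specifying a factor is the same as independently choosing, for each $\beta$, a bijection from $\{x\beta\}$ onto $\{\beta y\}$.

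With this in hand the completion is immediate. The cycle $C$ forces, inside the block of each $\beta$, a partial map from those sources $x\beta$ that occur on $C$ to the corresponding targets on $C$; because the vertices of $C$ are distinct, this partial map is injective and saturates equally many sources and targets of the block. In a complete bipartite graph such a partial matching extends to a perfect one, so I may complete it to a bijection $\{x\beta\}\to\{\beta y\}$ arbitrarily, and doing this in every block yields a permutation $f$ of $\Sigma_k^n$ that agrees with $C$ on $V(C)$. Since the restriction of $f$ to $V(C)$ is precisely the closed successor map of $C$, the set $V(C)$ forms one full cycle of $f$ and the remaining cycles of $f$ cover $\Sigma_k^n\setminus V(C)$; together they are a factor containing $C$. (One could alternatively invoke Petersen's Lemma~\ref{petersen}, but the block decomposition makes the argument self-contained.) The only point that needs care — and the one I would verify explicitly — is that the per-block constraints imposed by $C$ really are consistent partial injections with matching deficiency on the two sides; this is exactly where the hypothesis that $C$ is a \emph{cycle}, rather than an arbitrary set of edges, is used, and it is the single place where the argument could go wrong.
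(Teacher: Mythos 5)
Your proof is correct, but it takes a genuinely different route from the paper's. The paper projects the cycle $C$ down to $G_{n-1}^k$ (using the edge--vertex correspondence between $G_{n-1}^k$ and $G_n^k$), observes that the complement of the resulting connected Eulerian subgraph is again Eulerian, takes an Eulerian circuit in each connected component of that complement, and lifts these back to cycles of $G_n^k$ that together with $C$ cover all vertices; this requires a separate base case for $n=1$ and invokes the Eulerian circuit theorem. You instead work entirely inside $G_n^k$: you decompose its edge set into the $k^{n-1}$ complete bipartite blocks $K_{k,k}$ indexed by the connectors $\beta\in\Sigma_k^{n-1}$, note that a $2$-factor is exactly an independent choice of one perfect matching per block, and extend the partial matching that $C$ induces in each block (which is indeed a partial matching, since each vertex of the cycle has in- and out-degree at most one). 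This is more elementary and uniform in $n$ --- no Eulerian machinery, no induction, no special case --- and it makes explicit the same block structure that underlies Petersen's lemma for de Bruijn graphs; what it gives up is the connection to $G_{n-1}^k$ that the paper exploits repeatedly elsewhere (e.g.\ in the proof of the main existence theorem). Your final cautionary remark is the right one, and it is easily discharged: the per-block constraints are partial matchings precisely because no vertex of $C$ is the source (or target) of two edges of $C$, and any partial matching of $K_{k,k}$ automatically saturates equally many vertices on each side, hence extends to a perfect matching.
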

\begin{proof}
	For $n=1$, the graph $G_1^k$ contains a loop, i.e., the edge $(a,a)$, for each vertex $a$ where $a$ is a letter. A given cycle $C$ can be therefore completed with loops in vertices that are not contained in $C$. 
	
	Let $n \geq 2$ and let $C$ be a cycle in $G_n^k$. Consider the complement $H$ of the connected Eulerian graph corresponding to $C$ in $G_{n-1}^k$. The graph $H$ is Eulerian, and the cycles in $G_n^k$ corresponding to Eulerian circuits of connected components of $H$ together with $C$ form a factor of $G_n^k$.  
\end{proof}	

\begin{figure}[h]
\begin{center}
\begin{tikzpicture}
[
scale =1.0,
>=stealth, 
node distance = 8ex, inner sep = 2pt,
every node/.style = {font = \scriptsize}
]
\node (000) at (0,0) {$000$};
\node  (001)  at (1.5,-1) {$001$};
\node (100)  at (-1.5,-1) {$100$};
\node (010) at (0,-2) {$010$};
\node[fill = myred!20] (101) at (0,-3) {$101$};
\node (011) at (1.5,-4) {$011$};
\node (111) at (0,-5) {$111$};
\node (110) at (-1.5,-4) {$110$};
\path (000) edge[out = 130, in = 50, loop , draw = white] node[above] {{\bf \tiny 0}} (000); 
\path (111) edge[out = 310, in = 230, loop, draw = white] node[below] {{\bf \tiny 1}} (111);
\foreach \from/\to/\p in {000/001/1,001/010/0,100/001/1,100/000/0,101/011/1,011/111/1,011/110/0,111/110/0,110/101/1}
\path[->] (\from) edge node[above] {{\bf \tiny  \p}} (\to);
\foreach \from/\to/\p in {001/011/1,110/100/0} 
\path[->] (\from) edge node[left] {{\bf \tiny \p}} (\to); 
\path[->] (010.310) edge node[right] {{\bf \tiny 1}} (101.50); 
\path[->] (101.130) edge node[left] {{\bf \tiny 0}} (010.230); 
\foreach \from/\to in {000/001,001/010,100/001,001/011,011/111,111/110,110/100,100/000}
\path[->] (\from) edge[draw = myred, thick]  (\to);
\coordinate (c) at  ($(100.center) + (1mm,0)$);
\path[draw = myred, thick, rounded corners = 5pt,->] (111.center) ..  controls (1,-6) and (-1,-6) ..  (111.center) -- (110.center)  -- (100.center) -- (000.center) ..  controls (-1,1) and (1,1) .. (000.center) -- (001.center) -- (010.center) -- (c) -- (001.center) -- (011.center) -- cycle;
\path[->,shorten >=3mm, shorten <=2mm] (010.center) edge[draw = myred, thick] (c);
\path[->,shorten >=.7cm, shorten <=.9cm] (1,-5.78) edge[draw = myred, thick] (-1,-5.78);
\path[->,shorten >=.7cm, shorten <=.9cm] (-1,.78) edge[draw = myred, thick] (1,.78);
\end{tikzpicture}
\end{center}
\caption{The circuit representing the word $10011110000$ in $G_3^2$}
\label{path}
\end{figure}
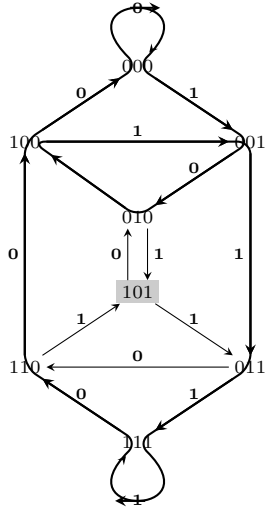

\begin{lem}\label{connect}
	Let $H'$ be an Eulerian subgraph of $G_n^k$ in which each vertex of $G_n^k$ has degree at least two. Then there exists a \emph{connected} Eulerian subgraph $H$ of $G_n^k$ in which each vertex has the same degree as in $H'$. In particular, the number of edges in $H$ is the same as in $H'$.  
\end{lem}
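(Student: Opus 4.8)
The plan is to start from $H'$ and repeatedly perform a local edge-swap that preserves the in- and out-degree of every vertex while strictly decreasing the number of connected components, until a connected graph is reached. First I would record a consequence of the hypotheses: since $H'$ is Eulerian, at each vertex the in-degree equals the out-degree, so the assumption that every vertex has degree at least two means every vertex has in-degree $=$ out-degree $\geq 1$. This depends only on the degree sequence, so it is preserved by any degree-preserving modification; in particular, throughout the argument every vertex will have at least one outgoing and one incoming edge.

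Next I would define the swap. Suppose $u_1 = at$ and $u_2 = a't$ are two vertices sharing the same length-$(n-1)$ suffix $t$ (so $a \neq a'$), lying in two \emph{different} components, and let $u_1 \to tb_1$ and $u_2 \to tb_2$ be outgoing edges present in the current graph (these exist by the degree observation). I replace these two edges by $u_1 \to tb_2$ and $u_2 \to tb_1$. Both replacements are edges of $G_n^k$, since $u_1$ and $u_2$ have suffix $t$ while $tb_1, tb_2$ have prefix $t$; and the swap alters neither the out-degrees of $u_1, u_2$ nor the in-degrees of $tb_1, tb_2$, so the graph remains Eulerian with exactly the same degree at every vertex. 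Because $tb_1$ lies in the component of $u_1$ and $tb_2$ in the component of $u_2$, these targets are distinct (so $b_1 \neq b_2$) and neither new edge was already present (otherwise two different components would share a vertex); hence no parallel edge is created. Finally, deleting the single edge $u_1 \to tb_1$ from the connected Eulerian component of $u_1$ leaves it connected — an Eulerian circuit of that component becomes an Eulerian path after the deletion, and every vertex retains an edge — and likewise for $u_2$; adding the two crossing edges then fuses the two components into one and touches no other component. Thus each swap lowers the component count by exactly one.

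It remains to guarantee that such a swap is available whenever the current graph $H$ is disconnected, and this is the step I expect to be the crux. Here I would invoke the strong connectivity of $G_n^k$: taking a vertex in one component and a vertex in another and following a directed path between them in $G_n^k$, some consecutive edge $p \to q$ must cross from a component $C_i$ to a different component $C_j$. Writing $p = at$ and $q = tb$, the vertex $p = u_1$ lies in $C_i$ and has an outgoing edge in $H$; meanwhile $q$ has an incoming edge in $H$ from some vertex of the form $a't$ (all in-neighbours of $tb$ have this shape), which lies in $C_j$. Setting $u_2 = a't$ yields exactly the configuration required by the swap. Since the number of components of $H'$ is finite and each swap decreases it by one while preserving all degrees, after finitely many swaps we obtain a connected Eulerian subgraph $H$ with the same degree at each vertex as $H'$; in particular $H$ has the same number of edges. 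The main obstacle throughout is not finding a swap but verifying that a single swap cannot inadvertently split one of the merged components — which is why the Euler-circuit/Euler-path argument for robustness under single-edge deletion is essential.
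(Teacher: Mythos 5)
Your proof is correct and follows essentially the same route as the paper: it uses the identical edge exchange at two vertices $at$ and $a't$ sharing the suffix $t$ but lying in different components (found via the connectivity of $G_n^k$ and the fact that every vertex has positive in- and out-degree in $H'$), preserving all degrees while reducing the number of components by one, and iterates. You supply somewhat more detail than the paper does — notably the Euler-path argument that deleting one edge cannot disconnect a component, and the check that no parallel edge is created — but the underlying idea is the same.
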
	
\begin{proof}
	Suppose that $H'$ is not connected and proceed by induction on the number of its connected components. There exist vertices $at$ and $tb$ in $G_n^k$, where $a$ and $b$ are letters, such that $at \in C_1$ and $tb \in C_2$, where $C_1$ and $C_2$ are  two distinct connected components of $H'$. Let $(at,tc)$ be an edge in $C_1$ and $(dt,tb)$ be an edge in $C_2$. Define $H'_1$ by replacing edges  $(at,tc)$ and $(dt,tb)$ in $H'$ with edges $(at,tb)$ and $(dt,tc)$. The graph $H'_1$ satisfies the hypothesis of the lemma and has 
	a strictly smaller number of connected components.
	Moreover, the degrees of all vertices are not affected by the exchange of edges. This completes the proof.
\end{proof}

We can now reprove \cite[Theorem 1]{Lempel:1971} (see also \cite[Theorem A and Theorem B]{Yoeli:1962}) in the form suitable for our purposes. 
\begin{thm}
	Let $k\geq 2$ and $n\geq 1$. Then for every $N$, $0 < N \leq k^{n+1}$, the graph $G_n^k$ contains a connected Eulerian graph $H$, with $N$ edges and $\min\{k^n,N\}$ vertices. In other words, $H$ is a cycle if $N\leq k^n$, and $H$ contains all vertices of $G_n^k$ otherwise.
\end{thm}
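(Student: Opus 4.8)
The plan is to treat the two regimes $N\le k^n$ and $N>k^n$ separately, since the conclusion is genuinely different (a single cycle versus a spanning Eulerian graph). The common backbone is that $G_n^k$ is regular of degree $2k$ and hence Eulerian, and that Lemma~\ref{connect} lets me defer all connectivity concerns to the very end: to produce a \emph{connected} spanning Eulerian subgraph with a prescribed number of edges it suffices to produce \emph{any} spanning Eulerian subgraph with that many edges in which every vertex has degree at least two, and then invoke Lemma~\ref{connect}. So in both regimes the real task is to control the number of edges while maintaining the degree conditions, and only afterwards to worry about connectivity.

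For $N\le k^n$ I would build a cycle of length $N$ by induction on $n$, exploiting the correspondence recorded just before Lemma~\ref{complete}: the edges of $G_{n-1}^k$ are the vertices of $G_n^k$, and circuits in $G_{n-1}^k$ correspond to cycles in $G_n^k$. A circuit of length $N$ in $G_{n-1}^k$ is exactly a connected Eulerian subgraph of $G_{n-1}^k$ with $N$ edges. Since $N\le k^n=k^{(n-1)+1}$, the inductive hypothesis (the full statement for $n-1$) supplies such a connected Eulerian subgraph, and transporting it through the correspondence yields a cycle of length $N$ in $G_n^k$. The base case $n=1$ is immediate: $G_1^k$ is the complete digraph with all loops on $k$ vertices, which plainly has a cycle of every length from $1$ to $k$.

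For $k^n<N\le k^{n+1}$ I would start from all of $G_n^k$ (which realizes $N=k^{n+1}$) and remove an Eulerian subgraph $R$ with exactly $k^{n+1}-N$ edges, chosen so that $R$ deletes at most $k-1$ incoming and at most $k-1$ outgoing edges at each vertex; the complement then still has every vertex of degree at least two, so it is spanning, and Lemma~\ref{connect} makes it connected. To build $R$ of a given size I would combine two ingredients: the decomposition of $G_n^k$ into $k$ edge-disjoint factors from Petersen's Lemma~\ref{petersen}, which lets me delete whole factors in blocks of $k^n$ edges, and the cycles of every length $\le k^n$ furnished by the previous paragraph, which would handle the residue. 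Lemma~\ref{complete} is what keeps the procedure inside the degree budget: completing a cycle to a factor shows that any cycle I delete sits inside a spanning family of vertex-disjoint cycles, so deleting it lowers each affected degree by exactly one and never strands a vertex.

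The hard part will be the fine control of the size of $R$, namely hitting \emph{every} integer value of $k^{n+1}-N$, not merely the multiples of $k^n$ and the values below $k^n$. This is already subtle for $k=2$, where after fixing one factor the remaining edges form a disjoint union of cycles and only subset-sums of their lengths are immediately available; there the residue cannot be taken as an arbitrary cycle, and one must instead follow Yoeli's inductive argument, rerouting edges (precisely the edge-exchange used in the proof of Lemma~\ref{connect}) so as to adjust a deleted cycle's length one step at a time while preserving the Eulerian property. I expect essentially all the difficulty to concentrate here; once a deletable Eulerian subgraph of the exact required size is in hand, Proposition~\ref{prop1} together with Lemma~\ref{connect} finishes the argument mechanically.
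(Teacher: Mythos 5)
Your treatment of the regime $N\le k^n$ is exactly the paper's: induction on $n$, base case $G_1^k$ a clique with loops, and the edge--vertex correspondence turning a connected Eulerian subgraph of $G_{n-1}^k$ with $N$ edges into a cycle of length $N$ in $G_n^k$. The use of Lemma~\ref{connect} as the final connectivity step also matches. But in the regime $k^n<N\le k^{n+1}$ there is a genuine gap, and you have located it yourself: your subtractive scheme requires deleting, from within a single factor, a union of cycles whose lengths sum to a prescribed residue, and the cycle lengths of that factor are not under your control, so only certain subset-sums are reachable. You defer this to ``Yoeli's inductive argument, rerouting edges one step at a time,'' but that is precisely the step that needs to be carried out, and as stated your proof does not produce an $R$ of every required size.

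The missing idea is to run the construction additively and to apply Lemma~\ref{complete} to the \emph{residue cycle itself} rather than to the edges being discarded. Write $N=jk^n+N'$ with $1\le j\le k-1$ and $0<N'\le k^n$. The first regime already gives a cycle $C$ of length exactly $N'$ in $G_n^k$; Lemma~\ref{complete} completes $C$ to a factor $F_1$, and Petersen's Lemma~\ref{petersen} applied to the complement of $F_1$ (regular of degree $2k-2$) yields $k-1$ further factors $F_2,\dots,F_k$. Taking $C$ together with the $j$ whole factors $F_2,\dots,F_{j+1}$ gives an Eulerian subgraph with exactly $jk^n+N'=N$ edges in which every vertex has degree at least two (from the full factors), and Lemma~\ref{connect} finishes. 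This sidesteps the subset-sum obstruction entirely, because the only cycle whose length must be prescribed is $C$, and that length is supplied by the first half of the induction. (Incidentally, Proposition~\ref{prop1} plays no role in this theorem; it is only needed afterwards to pass from the graph statement to generalized de Bruijn words.)
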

\begin{proof}
	We proceed by induction on $n$. Let first $0 < N \leq k^n$. 
	If $n=1$, then $G_1^k$ contains a cycle of length $N$, since $G_1^k$ is the clique on $k$ vertices (with loops).
	If $n > 1$, then, by the induction hypothesis, the graph $G_{n-1}^k$ contains a circuit of length $N$, which corresponds to a cycle of length $N$ in $G_{n}^k$.
	
	Let now $N = jk^n + N'$ where $1 \leq j \leq k-1$ and $0 < N' \leq k^n$. Let $C$ be a cycle in $G_{n}^k$ of length $N'$ obtained in the previous paragraph, and let $F_1 = \{C,C_1,\cdots,C_m\}$ be a factor of $G_{n}^k$ obtained by Lemma \ref{complete}. The complement of $F_1$ is a regular graph of degree $2k-2$, whose edges can be partitioned into $k-1$ factors $F_2, F_3, \dots, F_k$ by Lemma \ref{petersen}. The edges of $C$, $F_2$, $F_3$, \dots $F_{j+1}$ together yield an Eulerian graph $H'$ with $N$ edges. Each vertex of $G_n^k$ has degree at least two in $H'$. We obtain $H$ from $H'$ using Lemma \ref{connect}.
\end{proof}

We therefore have proved the desired result.
\begin{cor}\label{lempel}
For all integers $k \geq 2$ and $N \geq 1$ there exists a generalized
de Bruijn word of length $N$ over a $k$-letter alphabet.
\end{cor}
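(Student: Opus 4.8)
The plan is to read a generalized de Bruijn word directly off the connected Eulerian graph supplied by the preceding theorem, using Proposition \ref{prop1} to reduce the two conditions of Definition \ref{gubdef} to the single pair of complexities $\gamma_r$ and $\gamma_{r+1}$. Fix $k \ge 2$ and $N \ge 1$ and set $r = \lfloor \log_k N \rfloor$, so that $k^r \le N < k^{r+1}$. By Proposition \ref{prop1} it suffices to produce a circular word $w$ of length $N$ with $\gamma_r(w) = k^r$ and $\gamma_{r+1}(w) = N$; as noted in the discussion preceding Lemma \ref{complete}, this is the same as exhibiting a circuit of length $N$ in $G_r^k$ that visits every vertex.

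First I would dispose of the trivial range $1 \le N < k$, where $r = 0$: here the word $w = 0\,1\cdots(N-1)$ consisting of $N$ distinct letters already satisfies $\gamma_0(w) = 1$ and $\gamma_1(w) = N$, so it is a generalized de Bruijn word by Proposition \ref{prop1}. For $N \ge k$ we have $r \ge 1$, and I would apply the preceding theorem with $n = r$. The hypothesis $0 < N \le k^{n+1}$ holds because $N < k^{r+1} = k^{n+1}$, so the theorem yields a connected Eulerian subgraph $H$ of $G_r^k$ with exactly $N$ edges and $\min\{k^r, N\} = k^r$ vertices; since $N \ge k^r$, the graph $H$ in fact contains \emph{every} vertex of $G_r^k$ (it is a Hamiltonian cycle when $N = k^r$ and a larger Eulerian graph when $N > k^r$).

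It then remains to translate $H$ back into a word. Being connected and Eulerian, $H$ admits an Eulerian circuit, i.e.\ a circuit of length $N$ in $G_r^k$ traversing all $N$ of its edges and passing through all $k^r$ of its vertices. Reading this circuit off as a circular word $w$ of length $N$ in the standard way, the length-$r$ factors of $w$ are precisely the vertices of $H$ and the length-$(r+1)$ factors are precisely the edges of $H$, under the correspondence between edges of $G_r^k$ and vertices of $G_{r+1}^k$ recalled earlier. Since $H$ visits all $k^r$ vertices we get $\gamma_r(w) = k^r$, and since its $N$ edges are distinct we get $\gamma_{r+1}(w) = N$; Proposition \ref{prop1} then certifies that $w$ is the desired generalized de Bruijn word of length $N$.

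The substantive work has already been done in the preceding theorem (the existence of the connected Eulerian graph), so here the difficulty is boundary bookkeeping rather than any genuine obstacle: one must check that $n = r \ge 1$ so the theorem applies (handled by the $N < k$ base case), that $N \le k^{n+1}$ (immediate from $N < k^{r+1}$), and that $\min\{k^r, N\} = k^r$ so that $H$ really meets every vertex (immediate from $k^r \le N$). The one step deserving explicit care is verifying that the circuit-to-word dictionary sends the vertices and edges traversed by the circuit to the length-$r$ and length-$(r+1)$ circular factors of $w$, respectively, since this is exactly what makes the two conditions of Proposition \ref{prop1} hold at once.
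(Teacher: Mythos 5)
Your proposal is correct and follows essentially the same route as the paper, which derives the corollary directly from the preceding theorem via the circuit-to-word correspondence and Proposition~\ref{prop1} (the paper leaves this deduction implicit with the remark ``we therefore have proved the desired result''). Your explicit treatment of the range $1 \le N < k$, where $r = 0$ and the theorem's hypothesis $n \ge 1$ does not apply, is a small but genuine piece of bookkeeping that the paper glosses over, and you handle it correctly.
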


\begin{rem}
	We have not been able to find this precise notion of generalized de Bruijn word 
	in the literature anywhere, although there are some papers that come
	very close.   For example, Iv\'anyi \cite{Ivanyi:1987}  considered
	the analogue of the upper bound \eqref{bnd} for ordinary (non-circular) words.  He called a word $w$
	{\it supercomplex} if the bound is attained not only for $w$, but also for all prefixes of $w$.  However, binary supercomplex words do not exist past length $9$.
	The third author also considered the analogue of the bound \eqref{bnd} for ordinary words \cite{Shallit:1993}.
	However, Lemma 3 of that paper actually implies the existence
	of our generalized (circular) de Bruijn words of every length over a binary
	alphabet, although this was not stated explicitly.
	Anisiu, Bl{\'a}zsik, and K{\'a}sa \cite{Anisiu&Blazsik&Kasa:2002} 
	discussed a related concept:  namely, those length-$N$ words $w$ for which
	$\max_{1 \leq i \leq N} \rho_i (w) =
	\max_{x \in \Sigma_k^N} \max_{1 \leq i \leq N} \rho_i (x)$ where $\rho_i (w)$ denotes
	the number of distinct length-$i$ factors of $w$ (here considered in the
	ordinary sense, not circularly).    
	Also see \cite{Flaxman&Harrow&Sorkin:2004}.
\end{rem}

\section{State complexity}
In this section we show that a generalized de Bruijn word can be characterized as a word $w$ with the maximum state complexity of $C(w)$.  
To this end, we first consider a more general setting of languages $L\subseteq \Sigma^N$.
In other words, $L$ is a language containing some words of length $N$ only.

\begin{table}[h]
\begin{center}
\begin{tabular}{p{2ex}|l|c}
$N$ & 
	\begin{tabular}{l}
		lexicographically least generalized \\ binary de Bruijn word of length $N$
	\end{tabular}  & 
	\begin{tabular}{l}
	number of \\ such words
\end{tabular}  
\\
\hline\hline
  1 & 0 & 2 \\ \hline
  2 & 01 & 1 \\ \hline
  3 & 001 & 2 \\ \hline
  4 & 0011 & 1\\ \hline
  5 & 00011 & 2 \\ \hline
  6 & 000111 & 3 \\ \hline
  7 & 0001011 & 4 \\ \hline
  8 & 00010111 & 2 \\ \hline
  9 & 000010111 & 4 \\ \hline
 10 & 0000101111 & 3 \\ \hline
 11 & 00001011101 & 6 \\ \hline
 12 & 000010100111 & 13 \\ \hline
 13 & 0000100110111 & 12 \\ \hline
 14 & 00001001101111 & 20\\ \hline
 15 & 000010011010111 & 32 \\ \hline
 16 & 0000100110101111 & 16 \\ \hline
 17 & 00000100110101111 & 32 \\ \hline
 18 & 000001001101011111 & 36 \\ \hline
 19 & 0000010100110101111 & 68 \\ \hline
 20 & 00000100101100111101 & 141 \\ \hline
 21 & 000001000110100101111 & 242 \\ \hline
 22 & 0000010001101001011111 & 407 \\ \hline
 23 & 00000100011001110101111 & 600 \\ \hline
 24 & 000001000110010101101111 & 898 \\ \hline
 25 & 0000010001100101011011111 & 1440 \\ \hline
 26 & 00000100011001010011101111 & 1812 \\ \hline
 27 & 000001000110010100111011111 & 2000 \\ \hline
 28 & 0000010001100101001110101111 & 2480 \\ \hline
 29 & 00000100011001010011101011111 & 2176 \\ \hline
 30 & 000001000110010110100111011111 & 2816 \\ \hline
 31 & 0000010001100101001110101101111 & 4096
\end{tabular}
\end{center}
\caption{Generalized de Bruijn words}
\label{tab1}
\end{table}

The following theorem determines the maximum state complexity
of such a language for sufficiently large $N$, and characterize languages that achieve the maximum.
Let $\pi_i(L)$ (resp., $\sigma_i(L)$) denote the number of prefixes 
(resp., suffixes) of length $i$ of the language $L$.

\begin{thm}\label{thm1}
	Let $\Sigma$ be an alphabet of cardinality $k\geq 2$, let $N \geq 1$ be an integer, and let
	$L \subseteq \Sigma^N$. Define $m = |L|$ and $r = \left\lfloor \log_k \abs L \right\rfloor$ and $v =1 + k + k^2 + \cdots + k^r$.
	If $N \geq 3r+1$, then  
	\begin{equation}
	\scc(L) \leq 2v  + m\cdot (N-2r-1) + 1.
	\label{maineq}
	\end{equation}
	If $N > 3r + 1$, then equality holds in \eqref{maineq} if
	and only if both of the following two conditions are satisfied:
	\begin{itemize}
		\item[(a)] $\sigma_r(L) = \pi_r(L) = k^r$ 
		\item[(b)] $\sigma_{r+1}(L) = \pi_{r+1}(L) = m$. 
	\end{itemize}
\end{thm}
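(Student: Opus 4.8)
The plan is to build the minimal complete DFA for $L$ directly from the Myhill--Nerode equivalence classes and count them. Recall that the states of the minimal DFA for $L$ correspond to the distinct left quotients $u^{-1}L = \{\,x : ux \in L\,\}$ as $u$ ranges over $\Sigma^*$, together with a dead state. Since $L \subseteq \Sigma^N$, a word $u$ can only be a prefix of a word in $L$ if $|u| \le N$, so I would stratify the analysis by $|u|$. For each length $i$ with $0 \le i \le N$, the nonempty quotients $u^{-1}L$ with $|u| = i$ are in bijection with the distinct \emph{residual languages} determined by the length-$i$ prefixes of $L$; two prefixes give the same quotient exactly when they can be extended to the same set of full words. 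I would count, for each $i$, how many distinct nonempty quotients arise, sum over $i$, and add $1$ for the dead state to get $\scc(L)$.

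First I would handle the two extreme ranges of $i$ where the count is forced. For small $i$ (roughly $i \le r$), every distinct length-$i$ prefix can potentially yield a distinct quotient, but the number of distinct prefixes is bounded by $\pi_i(L) \le k^i$; summing $\min(k^i,\cdot)$ over $i=0,\dots,r$ produces the $v = 1 + k + \cdots + k^r$ contribution, and symmetrically the suffix side (reading quotients near the end, where the residual is determined by a length-$(N-i)$ suffix) contributes the second $v$. For the large middle range, once $i$ is past the prefix-distinguishing threshold and before the suffix-distinguishing threshold, each of the $m$ words of $L$ is on its own distinct ``track'': the quotient is essentially determined by which single word we are completing, giving $m$ distinct quotients per level across roughly $N - 2r - 1$ levels, hence the $m\cdot(N-2r-1)$ term. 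The hypothesis $N \ge 3r+1$ is exactly what guarantees these three ranges do not overlap, so the bounds add rather than interfere; I would verify the arithmetic that the level indices partition cleanly under this hypothesis.

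The inequality \eqref{maineq} then follows by bounding the number of distinct quotients at each level by the crude maximum ($\min(k^i,m)$ on the prefix side and the suffix analogue), and the additive $+1$ is the dead state. For the characterization of equality when $N > 3r+1$, I would argue that equality forces the per-level counts to simultaneously hit their maxima in every range. Hitting the maximum $k^r$ of distinct prefixes (and suffixes) of length $r$ is precisely condition (a), $\pi_r(L) = \sigma_r(L) = k^r$; and hitting the full count $m$ of distinct length-$(r+1)$ prefixes and suffixes, so that all $m$ words separate as early as possible and stay separated, is precisely condition (b). The strict inequality $N > 3r+1$ is used to ensure the middle range is nonempty, so that condition (b) is actually \emph{detected} by the state count (when $N = 3r+1$ the middle band can collapse and equality becomes achievable without (b), which is why the characterization is stated only for $N > 3r+1$).

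The main obstacle, and the step I would spend the most care on, is proving that distinct length-$i$ prefixes genuinely give distinct quotients under the equality hypotheses, and conversely controlling exactly when two prefixes \emph{merge}. Merging of quotients is what makes the count drop below the bound, so the heart of both the inequality and the characterization is a precise Myhill--Nerode separation argument: I must show that under (a) and (b) no two states collapse in the prefix band, none in the suffix band, and the $m$ tracks in the middle stay pairwise distinguishable, while also showing that failure of (a) or (b) forces a collapse somewhere that strictly lowers the count. Tracking the two-sided (prefix and suffix) counting simultaneously, and correctly matching the suffix-side quotients to states near the accepting end of the automaton without double-counting the middle band, is the delicate bookkeeping that I expect to require the most attention.
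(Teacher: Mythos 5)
Your overall framework --- left quotients stratified by the length of the defining prefix, with a prefix band, a middle band, and a suffix band --- is the same as the paper's. But there is a genuine gap in your suffix band, and it is precisely the error that sank the conference version of this theorem. You bound the number of states at level $N-\ell$ (for $\ell \le r$) by ``$\min(k^{\ell},m)$, symmetrically'' to the prefix side. This is false. A quotient at level $N-\ell$ is a \emph{set} of length-$\ell$ suffixes, not a single suffix, and the number of distinct such sets can exceed $k^{\ell}$. The paper's own counterexample $L=\{000000,000001,010000,100010,110101,111011\}$ (with $k=2$, $r=2$, $N=6$) has the three states $\{0\},\{1\},\{0,1\}$ at level $N-1$, which is more than $k^{1}=2$, and five states at level $N-2$, more than $k^{2}=4$. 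So your per-level bounds do not sum to the claimed right-hand side; taken at face value they would ``prove'' the theorem already for $N\ge 2r+1$, which that example refutes.

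The missing idea is a trade-off, not a clean partition. Set $d=m-\pi_{N-r-1}(L)$. Each non-singleton quotient of cardinality $d_p$ in the suffix band forces $d_p$ words of $L$ to share a prefix of length $N-r-1$, and these contributions accumulate across distinct quotients, so the suffix band has at most $(v-1)+dr$ states while the middle band has at most $(m-d)(N-2r-1)$. The totals combine to $2v+m(N-2r-1)+1-d(N-3r-1)$, and it is this last term --- not any non-overlap of index ranges, which only needs $N\ge 2r+1$ --- that is the true role of the hypothesis $N\ge 3r+1$. The equality characterization also runs through this quantity: for $N>3r+1$ equality forces $d=0$, which is what makes all suffix-band states singletons and lets you read off conditions (a) and (b); at $N=3r+1$ the term $d(N-3r-1)$ vanishes identically, so the maximum can be attained with $d=1$ and $\pi_{r+1}=\sigma_{r+1}=m-1$, as the paper's second example shows (your stated reason, that ``the middle band can collapse,'' is not what goes wrong there). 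Your proposal needs this deficiency-versus-excess bookkeeping before the inequality, let alone the characterization, goes through.
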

\begin{proof}
	We use the standard construction of the minimal automaton $\MA$ accepting $L$ as follows. The states $S_\MA$ of $\MA$ are left quotients $\mn p$ of the language $L$, where 
	\[ \mn p = \{ s \mid ps\in L \}\,. \]
	Note that all elements in the state $\mn p$ have the same length $N - |p|$.
	We divide the states of $\MA$ into subsets according to the length of words they contain, as follows:
	\[ S_\MA = A \cup M \cup \bigcup_{\ell=1}^{r} T_\ell \cup \{f\} \cup \{\emptyset\} \]
	where
	\begin{itemize}
		\item $A = \{ \mn p \mid |p|\leq r \}$,
		\item $M = \{ \mn p \mid r < |p| < N - r\}$ ,
		\item $T_\ell = \{ \mn p \mid |p| = N - \ell \}$,
		\item $f= \{\varepsilon\}$.
	\end{itemize}
	The state $f$ is the accepting state, and $\emptyset$ is the ``dead'' state.
	For the size of $A$ we have a bound $v = 1 + k + k^2 + \cdots + k^r$, since $v$ is the number of words $p$ that can define a state $\mn p$.
	
	Let $d = m - \pi_{N-r-1}$. For each length $r< \ell < N-r$, there are at most $\pi_{N-r-1}$ words $p$ of length $\ell$ 
	such that $\mn p$ is nonempty---namely, the prefixes of $L$ of length $\ell$.  Therefore the size of $M$ is at most $(m-d)\cdot (N - 2r - 1)$. 
	
	For $T_\ell$, $1 \leq \ell \leq r$, we need a more detailed analysis, which exhibits a trade-off between the size of $T_\ell$ and the size of $M$. More precisely, we shall show that large $T_\ell$ implies large $d$. Consider the set $T_\ell$ for some fixed $1 \leq \ell \leq r$. Every state $\mn p\in T_\ell$ is a set of words of length $\ell$. ``Expected'' elements of $T_\ell$ are singletons $\{s\}$, with $|s|=\ell$, which yields an ``expected'' size $k^\ell$ of $T_\ell$. Assume that $T_\ell$ contains a state $\mn p$ with cardinality $d_p$ larger than one, say $\mn p = \{s_1,s_2,\dots, s_{d_p}\}$. Then $L$ contains words $ps_1,ps_2,\dots,ps_{d_p}$, all having the same prefix of length $N-r - 1$. This implies that $d$ is at least $d_p-1$. Moreover, the contribution to $d$ is cumulative. Indeed, assume that $\mn {p'} =  \{s_1',s_2',\dots, s_{d_{p'}}'\}$ with $d_{p'}>1$ for some $\mn p \neq \mn {p'} \in T_\ell$. Then $p's_1',p's_2',\dots,p's'_{d_{p'}}$ are pairwise distinct words in $L$ with the same prefix of length $N-r - 1$, and they are also all distinct from any $ps \in L$.  Altogether we have (still with a fixed $\ell$)
	\[d \geq \sum_{\mn p\in T_\ell}(d_p -1),\]
	and the size of $T_\ell$ is at most $k^\ell + d$. Therefore, the set $T = \bigcup_{\ell = 1}^r T_\ell$ has size at most $k + k^2 + \dots + k^r + dr$.
	
	We have shown that 
	\begin{align}\label{d_bound}
	\begin{split}
	\scc(L) &\leq v + (m-d)(N-2r-1) + (v-1 + dr) + 2 = \\
	& = 2v + m(N - 2r -1) + 1 - d(N-3r -1),
	\end{split}
	\end{align}   
	which proves the bound, due to the assumption $N \geq 3r + 1$.
	\medskip
	
	To show the second half of the theorem, note that \eqref{d_bound} and $N > 3r + 1$ imply $d = 0$ if the equality holds in \eqref{maineq}. Therefore states in $T$ are all singletons, and all bounds in the above description have to be achieved. Then the automaton has the topology depicted in Figure \ref{topology} and the two conditions are satisfied.
	\begin{figure}[H]
		\begin{center}
			\begin{tikzpicture}
[baseline,->,>=stealth,
state/.style={circle,inner sep = 0, minimum size=2mm,very thin,draw=black,initial text=},
every node/.style={font=\small}]
\def \step {.55}
\def \dist {.3}
\draw[white, fill=gray!50] (-.5*\step,-8*\dist) rectangle (3.5*\step,8*\dist);
\draw[white, fill=gray!50] (11.5*\step,-8*\dist) rectangle (15.5*\step,8*\dist);
\draw[white] (3.5*\step,-8*\dist) rectangle (11.5*\step,8*\dist);
\node[state,initial]  (e) {};
\node[state]   (0) at (\step,4*\dist) {};
\node[state]  (1) at (\step,-4*\dist) {};
\foreach \y in {0,...,3}
\node[state]  (1\y) at (2*\step,6*\dist-4*\dist*\y) {};
\foreach \y in {0,...,7}
\node[state]  (2\y) at (3*\step,7*\dist-2*\dist*\y) {};
\foreach \from/\to in {e/0,e/1,0/10,0/11,1/12,1/13,10/20,10/21,11/22,11/23,12/24,12/25,13/26,13/27}
\path (\from) edge (\to);
\foreach \x in {4,...,11}
\foreach \y in {0,2,3,5,7,8,9,11,13,15}
\node[state]  (\x\y) at (\x*\step,7.5*\dist-\dist*\y) {};
\foreach \from/\to in {0/0,1/2,1/3,2/5,3/7,4/8,4/9,5/11,6/13,7/15}
\path (2\from) edge (4\to);
\foreach \x/\xx in {4/5,5/6,6/7,7/8,8/9,9/10,10/11}
\foreach \y in {0,2,3,5,7,8,9,11,13,15}
\path (\x\y) edge (\xx\y);
\node[state,accepting,inner sep = 2pt] at (15*\step,0) (f) {$f$};
\node[state]   (f0) at (14*\step,4*\dist) {};
\node[state]  (f1) at (14*\step,-4*\dist) {};
\foreach \y in {0,...,3}
\node[state]  (f1\y) at (13*\step,6*\dist-4*\dist*\y) {};
\foreach \y in {0,...,7}
\node[state]  (f2\y) at (12*\step,7*\dist-2*\dist*\y) {};
\foreach \to/\from in {/0,/1,0/10,0/11,1/12,1/13,10/20,10/21,11/22,11/23,12/24,12/25,13/26,13/27}
\path (f\from) edge (f\to);
%
\foreach \from/\to in  {1/0,1/2,0/3,2/5,2/7,5/8,3/9,4/11,7/13,6/15}
\path (11\to) edge (f2\from);
\draw [-,decorate,decoration={brace,amplitude=10pt},xshift=-0pt,yshift=0pt]
(-.5*\step,8*\dist) -- (3.5*\step,8*\dist) node [black,midway,yshift=0.6cm] 
{\footnotesize $r = 3$};
\draw [-,decorate,decoration={brace,amplitude=10pt},xshift=-0pt,yshift=0pt]
(11.5*\step,8*\dist) -- (15.5*\step,8*\dist) node [black,midway,yshift=0.6cm] 
{\footnotesize $r$};
\draw [-,decorate,decoration={brace,amplitude=10pt},xshift=-0pt,yshift=0pt]
(3.5*\step,8*\dist) -- (11.5*\step,8*\dist) node [black,midway,yshift=0.6cm] 
{\footnotesize $N - 2r -1$};
\draw [-,decorate,decoration={brace,amplitude=10pt},xshift=-0pt,yshift=0pt]
(15.5*\step,8*\dist) -- (15.5*\step,-8*\dist) node [black,midway,xshift=1cm] 
{\footnotesize $m = 10$};
\foreach \from/\to in  {e/1,1/12,12/24,24/48}
\path (\from) edge (\to);
\foreach \from/\to in  {4/5,5/6,6/7,7/8,8/9,9/10,10/11}
\path (\from8) edge (\to8);
\foreach \from/\to in  {118/f25,f25/f12,f12/f1,f1/f}
\path (\from) edge (\to);
\node at (1.8*\step,-9*\dist) {$A$};
\node at (7*\step,-9*\dist) {$M$};
\node at (13*\step,-9*\dist) {$T$};
\end{tikzpicture}
		\end{center}
		\caption{Example of the maximum automaton topology}
		\label{topology}
	\end{figure}
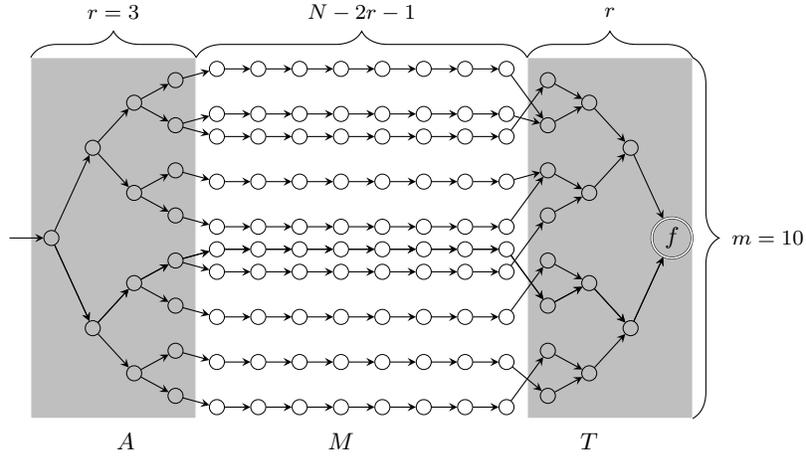
	Now assume that conditions (a) and (b) of the theorem are satisfied. Let $p$ be a prefix of a word in $L$ with $|p|> r$, and assume that $s_1, s_2\in \mn p$ for two distinct words $s_1$ and $s_2$. Then $ps_1,ps_2\in L$ have the same prefix of length $r+1$, a contradiction with $\pi_{r+1}=m$. Therefore all $\mn p$ in $M\cup T$ are singletons. From $\sigma_r = k^r$ we now deduce that $T_\ell = \{ \{s\} |\, \ell = |s|\}$ for each $\ell = 1,2,\dots,r$, and $T$ has size $k + k^2 + \cdots + k^{r}$. 
	
	Let $p_1$ and $p_2$ be two distinct prefixes in $L$ of length at most $N-r-1$ such that some $s$ is in both $\mn {p_1}$ and $\mn {p_2}$, which are states in $A\cup M$. Then $p_1s$ and $p_2s$ are two distinct words in $L$ with the same suffix of length $r+1$, a contradiction with $\sigma_{r+1}=m$. Therefore states in $A\cup M$ are pairwise disjoint. From $\pi_{r+1}=m$ we deduce that $L$ has $m$ distinct prefixes for each size $r< \ell < N - r$, hence the size of $M$ is $m\cdot (N - 2r - 1)$. Finally, from $\pi_r=k^r$ we obtain that $A$ contains $v$ distinct states.
	The ``dead'' state $\emptyset$ completes the bound. 
\end{proof} 

In the conference version of our paper we mistakenly claimed that 
Theorem~\ref{thm1} holds for $N \geq 2r + 1$ instead of  $N \geq 3r + 1$.
The following example shows that this claim was incorrect, and that the bound $N \geq 3r + 1$ is optimal. Consider the language
\[
L = \{000000,000001,010000,100010,110101,111011\}.
\]
We have $m=6$, $r=2$ and $N = 3r = 6$. The state complexity of $L$ is $22$ while $2v + m(N - 2r -1) + 1 = 21$. The minimal automaton for $L$ is shown in Figure \ref{counter}. Compared to the topology of Figure \ref{topology}, there is one state missing in part $M$ ($d=1$) which allows two non-singleton states in $T_2$ and $T_1$ (the ``dead'' state is not shown).  
\begin{figure}[H]
	\begin{center}
		\begin{tikzpicture}
[baseline,->,>=stealth,
state/.style={circle,inner sep = 2pt, draw = white, minimum size=5mm,very thin, initial text=},
every node/.style={font=\small}]
\def \step {1.2}
\def \dist {.3}
\node[state,initial]  (e) {$L$};
\node[state]   (0) at (\step,4*\dist) {$\mn 0$};
\node[state]  (1) at (\step,-4*\dist) {$\mn 1$};
\foreach \y/\p in {0/00,1/01,2/10,3/11}
\node[state]  (1\y) at (2*\step,6*\dist-4*\dist*\y) {$\mn\p$};
\foreach \y/\p/\d in {0/000/4,1/010/4,2/100/4,3/110/3.5,4/111/3.5}
\node[state]  (2\y) at (3.5*\step,6*\dist-\d*\dist*\y)  {$\mn\p$};
\foreach \y/\p/\d in {0/{00,01}/2,1/00/3,2/10/3.5,3/01/3.65,4/11/3.7}
\node[state]  (3\y) at (5*\step,9*\dist-\d*\dist*\y)  {$\{\p\}$};
\node[state]  (f) at (8*\step,0.1) {$\{\varepsilon\}$};
\node[state]   (0f) at (6.5*\step,4*\dist) {$\{0\}$};
\node[state]  (1f) at (6.5*\step,-4*\dist) {$\{1\}$};
\node[state]  (ef) at (6.5*\step,7*\dist) {$\{0,1\}$};
\node[rectangle, rounded corners = 5pt, fill=gray!50] at (5*\step,9*\dist) {$\{00,01\}$}; 
\node[rectangle, rounded corners = 5pt, fill=gray!50] at (6.5*\step,7*\dist) {$\{0,1\}$};
\node[rectangle, rounded corners = 5pt, fill=gray!50, color=gray!50] at (3.5*\step,9*\dist) {$[]000$};  
\foreach \from/\to/\pis in {e/0/0,e/1/1,0/10/0,0/11/1,1/12/0,1/13/1,10/20/0,11/21/0,12/22/0,13/23/0,13/24/1,20/30/0,21/31/0,22/32/0,23/33/1,24/34/0,30/ef/0,31/0f/0,32/0f/1,33/1f/0,34/1f/1,0f/f/0,1f/f/1,ef/f/{0,1}}
\path (\from) edge node[fill = white, inner sep = 1pt] {\tiny \pis} (\to);
\end{tikzpicture}
	\end{center}
	\caption{A counter-example for $N=3r$}
	\label{counter}
\end{figure}
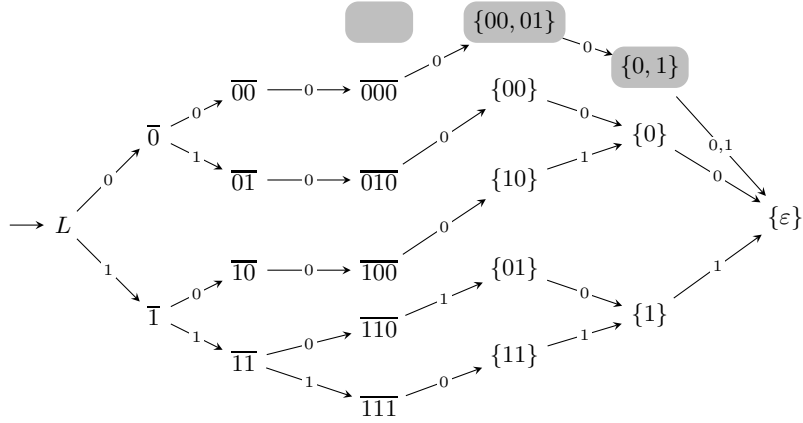
The slightly modified language
\[L' =  \{0000000,0000001,0100000,1000010,1100101,1110011\}\]
also shows that for $N=3r+1$, the maximum can be achieved with a different topology, namely with $\pi_{r+1}=\sigma_{r+1}=m-1$, see Figure \ref{counter2}.
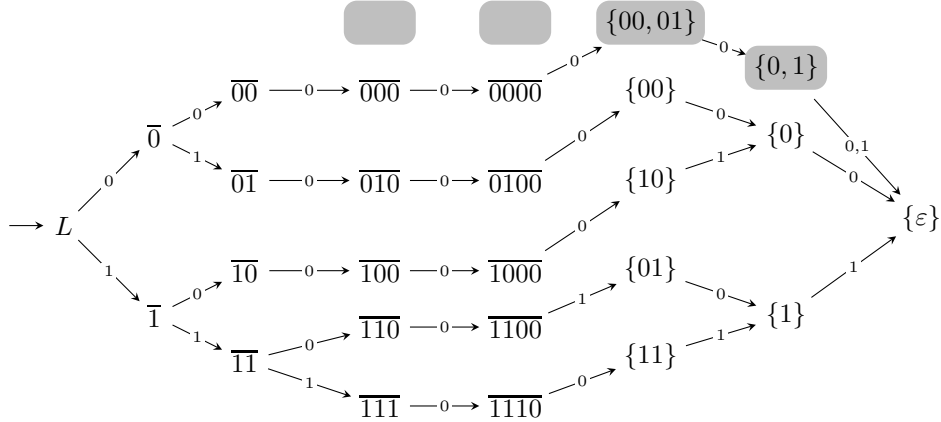
\begin{figure}[H]
	\begin{center}
		\begin{tikzpicture}
[baseline,->,>=stealth,
state/.style={circle,inner sep = 2pt, draw = white, minimum size=5mm,very thin, initial text=},
pismeno/.style={fill=white}
every node/.style={font=\small}]
\def \step {1.2}
\def \dist {.3}
\node[state,initial]  (e) {$L$};
\node[state]   (0) at (\step,4*\dist) {$\mn 0$};
\node[state]  (1) at (\step,-4*\dist) {$\mn 1$};
\foreach \y/\p in {0/00,1/01,2/10,3/11}
\node[state]  (1\y) at (2*\step,6*\dist-4*\dist*\y) {$\mn \p$};
\foreach \y/\p/\d in {0/000/4,1/010/4,2/100/4,3/110/3.5,4/111/3.5}
\node[state]  (2\y) at (3.5*\step,6*\dist-\d*\dist*\y)  {$\mn \p$};
\foreach \y/\p/\d in {0/0000/4,1/0100/4,2/1000/4,3/1100/3.5,4/1110/3.5}
\node[state]  (2a\y) at (5*\step,6*\dist-\d*\dist*\y)  {$\mn \p$};
\foreach \y/\p/\d in {0/{00,01}/2,1/00/3,2/10/3.5,3/01/3.65,4/11/3.7}
\node[state]  (3\y) at (6.5*\step,9*\dist-\d*\dist*\y)  {$\{\p\}$};
\node[state]  (f) at (9.5*\step,0.1) {$\{\varepsilon\}$};
\node[state]   (0f) at (8*\step,4*\dist) {$\{0\}$};
\node[state]  (1f) at (8*\step,-4*\dist) {$\{1\}$};
\node[state]  (ef) at (8*\step,7*\dist) {$\{0,1\}$};
\node[rectangle, rounded corners = 5pt, fill=gray!50] at (6.5*\step,9*\dist) {$\{00,01\}$}; 
\node[rectangle, rounded corners = 5pt, fill=gray!50] at (8*\step,7*\dist) {$\{0,1\}$};
\node[rectangle, rounded corners = 5pt, fill=gray!50, color=gray!50] at (3.5*\step,9*\dist) {$[]000$};  
\node[rectangle, rounded corners = 5pt, fill=gray!50, color=gray!50] at (5*\step,9*\dist) {$[]000$};  
\foreach \from/\to/\pis in {e/0/0,e/1/1,0/10/0,0/11/1,1/12/0,1/13/1,10/20/0,11/21/0,12/22/0,13/23/0,13/24/1,2a0/30/0,2a1/31/0,2a2/32/0,2a3/33/1,2a4/34/0,30/ef/0,31/0f/0,32/0f/1,33/1f/0,34/1f/1,0f/f/0,1f/f/1,ef/f/{0,1},20/2a0/0,21/2a1/0,22/2a2/0,23/2a3/0,24/2a4/0}
\path (\from) edge node[fill = white, inner sep = 1pt] {\tiny \pis} (\to);
\end{tikzpicture}
	\end{center}
	\caption{A counter-example for $N=3r+1$}
	\label{counter2}
\end{figure}

We can now formulate our result on the state complexity of generalized de Bruijn words.

\begin{thm}\label{thm_complexity}
If $w$ is a word of length $N$ over a $k$-letter
alphabet, with $k\geq 2$, then 
\begin{align}\label{maineq2}
\scc(C(w)) \leq 2v + N(N-2r-1) + 1,
\end{align}
where $r = \lfloor \log_k N \rfloor$ and $v = 1 + k + k^2 + \cdots + k^r$. 

Moreover, equality holds in \eqref{maineq2} iff $w$ is a generalized de Bruijn word.
\end{thm}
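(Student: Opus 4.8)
The plan is to obtain Theorem~\ref{thm_complexity} from the general bound of Theorem~\ref{thm1} applied to the uniform-length language $L = C(w)$, after translating the prefix/suffix counts $\pi_i,\sigma_i$ into the circular factor complexity $\gamma_i$. The first and decisive step is the identity $\pi_i(C(w)) = \sigma_i(C(w)) = \gamma_i(w)$, valid for every $i$: the length-$i$ prefix of the conjugate $vu$ of $w = uv$ is exactly the circular factor of $w$ of length $i$ starting at position $|u|$, and as the conjugate ranges over $C(w)$ these prefixes range over all length-$i$ circular factors; the suffix identity is proved the same way. In particular $m := |C(w)| = \gamma_N(w)$ counts the distinct conjugates, so $m \le N$, with equality precisely when $w$ is primitive.

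For the inequality I would split on primitivity. If $w$ is primitive then $m = N$, hence $\lfloor \log_k m\rfloor = \lfloor\log_k N\rfloor = r$, and Theorem~\ref{thm1} applied to $C(w)$ gives exactly \eqref{maineq2}, provided $N \ge 3r+1$. If $w$ is not primitive then $m < N$; writing $r_m = \lfloor\log_k m\rfloor \le r$, the hypothesis $N \ge 3r+1$ still yields $N \ge 3r_m + 1$, so Theorem~\ref{thm1} gives $\scc(C(w)) \le 2 v_m + m(N - 2r_m - 1) + 1$ with $v_m = 1 + k + \cdots + k^{r_m}$, and it remains to see that this does not exceed the right-hand side of \eqref{maineq2}. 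This is a short monotonicity check: setting $B(t) = 2 v_t + t(N - 2 r_t - 1) + 1$ with $r_t = \lfloor\log_k t\rfloor$ and $v_t = 1 + k + \cdots + k^{r_t}$, one computes the within-block increment $B(t+1) - B(t) = N - 2 r_t - 1$ and the increment at a power $B(k^s) - B(k^s-1) = N - 2s + 1$, both of which are nonnegative once $N \ge 3r+1$; hence $B(m) \le B(N)$, which is the desired bound.

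For the equality statement I would reverse these identities. Since the two increments are strictly positive when $N > 3r+1$, equality in \eqref{maineq2} forces $m = N$, so $w$ is primitive and $r_m = r$. The equality clause of Theorem~\ref{thm1} then says that equality holds iff $\sigma_r(C(w)) = \pi_r(C(w)) = k^r$ and $\sigma_{r+1}(C(w)) = \pi_{r+1}(C(w)) = m = N$; by the dictionary these are exactly $\gamma_r(w) = k^r$ and $\gamma_{r+1}(w) = N$, which are conditions \eqref{a} and \eqref{b} of Proposition~\ref{prop1}, that is, the definition of a generalized de Bruijn word. Conversely, a generalized de Bruijn word has $\gamma_N(w) = \min(k^N, N) = N$, hence is primitive with $m = N$, and its defining equalities supply conditions (a) and (b) of Theorem~\ref{thm1}, giving equality.

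The part I expect to be delicate is not this algebra but the finitely many short lengths with $N < 3r+1$ (for $k = 2$ these are $N \in \{2,3,4,5,6,8,9\}$), where Theorem~\ref{thm1} does not apply and where the increments $N - 2r_t - 1$ can even be negative, so that the monotonicity argument fails. For those lengths both the inequality and the equality characterization must be verified directly, which genuinely matters because generalized de Bruijn words already exist there (see Table~\ref{tab1}); reconciling the clean $N \ge 3r+1$ argument with these exceptional cases is the main obstacle.
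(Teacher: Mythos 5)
Your overall route is the same as the paper's: translate prefix/suffix counts into circular factor complexity via $\pi_i(C(w)) = \sigma_i(C(w)) = \gamma_i(w)$, invoke Theorem~\ref{thm1}, match conditions (a), (b) with Proposition~\ref{prop1}, and defer a finite list of small lengths to direct verification. Your explicit monotonicity check $B(m) \le B(N)$ handling non-primitive $w$ (where $m = |C(w)| < N$ and the parameter $r$ of Theorem~\ref{thm1} may drop) is a welcome addition: the paper passes over this silently, and your computation of the two increments is correct and does force $m = N$ at equality when $N > 3r+1$.

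There are, however, two gaps. First, you misidentify the exceptional set as $N < 3r+1$; it must be $N \le 3r+1$. The inequality of Theorem~\ref{thm1} needs only $N \ge 3r+1$, but its equality characterization requires the strict inequality $N > 3r+1$, and the paper's example $L'$ (Figure~\ref{counter2}) shows this is not an artifact of the proof: at $N = 3r+1$ a uniform-length language can attain the maximum while violating conditions (a) and (b). So at $N = 3r+1$ your argument proves the bound but not the ``iff''; concretely your list for $k=2$ omits $N = 1, 7, 10$, and for $k = 3$ and $k = 4$ it omits $N = 4$. Second, you correctly flag that the remaining short lengths must be checked directly but do not carry out the check; the paper disposes of $N=1$ by exhibiting the three-state minimal automaton and settles all other cases ($1 < N \le 10$ for $k=2$; $N = 3,4$ for $k=3$; $N=4$ for $k=4$) by an exhaustive computer search (Table~\ref{search}). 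Without that finite verification, and with the boundary cases $N = 3r+1$ unaccounted for, the equality characterization is not fully established.
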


\begin{proof}
Let $w$ be a word of length $N$, and let $L = C(w)$. Note that, for each $1 \leq i \leq N$, we have $\pi_i(L) = \sigma_i(L) = \gamma_i(w)$.
Therefore, the theorem follows from Theorem \ref{thm1} if $N > 3r + 1$.
 
 For $N \leq 3r + 1$, the claim has to be checked separately. This concerns the following cases:
 \begin{itemize}
 	\item $N = 1$ for any $k \geq 2$;
 	\item $1 < N \leq 10$ for $k = 2$;
 	\item $N = 3$ and $N = 4$ for $k = 3$; and
 	\item $N = 4$ for $k = 4$.
 \end{itemize}
If $|w|=1$, then $r=0$, $v=1$, and the minimal accepting automaton has three states: $\{w\}$, $\{\varepsilon\}$ and $\emptyset$. Moreover, $w$ is a generalized de Bruijn word, since $\gamma_0(w) = \gamma_1(w) = k^0 = N = 1$. Therefore the theorem holds in this case.

Table \ref{search} lists all generalized de Bruijn words (up to the conjugation and the exchange of letters) for the remaining cases not covered by Theorem \ref{thm1}. We verified by an exhaustive computer search
that they are exactly the words for which equality holds in \eqref{maineq2}, and that no other word has a larger complexity.   
\end{proof}
\begin{table}[H]
	\begin{center}
		\begin{tabular}{c|c|l}
			$k$ & $N$ & maximum words \\
			\hline
			2 & 2 & 01 \\
			2 & 3 & 001 \\
			2 & 4 & 0011 \\
			2 & 5 & 00011\\
			2 & 6 & 000111, 001011 \\
			2 & 7 & 0001011, 0001101 \\
			2 & 8 & 00010111 \\
			2 & 9 & 000010111, 000011101 \\
			2 & 10 & 0000101111, 0001011101 \\
			3 & 3 & 012\\
			3 & 4 & 0012,0102 \\
			4 & 4 & 0123
		\end{tabular}
	\end{center}
	\caption{Maximum words not covered by Theorem \ref{thm1}}
	\label{search}
\end{table}

For $k = 2$ the maximum state complexity of $C(x)$ over length-$N$ words $x$ is given in Table~\ref{tab2} for $1 \leq N \leq 10$.   It is sequence
\seqnum{A316936} in the OEIS \cite{Sloane}.
\begin{table}[H]
\begin{center}
\begin{tabular}{c|c}
$N$ & $\max_{x\in\Sigma_2^N} \scc(C(x))$ \\
\hline
1 & 3 \\
2 & 5 \\
3 & 7 \\
4 & 11\\
5 & 15 \\
6 & 21 \\
7 & 29 \\
8 & 39 \\
9 & 49 \\
10 & 61
\end{tabular}
\end{center}
\caption{Maximum state complexity of conjugates of binary words of length $N$}
\label{tab2}
\end{table}

\section{Counting generalized de Bruijn words}
We first count the total number of factors of a generalized de Bruijn word.
This is a generalization of Theorem 2 of \cite{Shallit:1993} to all
$k \geq 2$, adapted for the case of circular words.

\begin{prop}
	If $w \in \Sigma_k^N$ is a generalized de Bruijn word, then
	$$ \sum_{0 \leq i \leq N} \gamma_i (w) = {{k^{r+1} -1 } \over {k-1}} + N(N-r),$$
	where $r = \lfloor \log_k N \rfloor$.
\end{prop}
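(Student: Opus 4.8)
The plan is to reduce the proposition to an elementary evaluation of a sum using the defining property of generalized de Bruijn words. By Definition~\ref{gubdef}, a generalized de Bruijn word $w \in \Sigma_k^N$ satisfies $\gamma_i(w) = \min(k^i, N)$ for every $0 \le i \le N$. Hence the left-hand side is exactly $\sum_{i=0}^{N} \min(k^i, N)$, and the entire statement amounts to evaluating this sum in closed form.

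First I would locate the index at which the two branches of the minimum cross over. Because $r = \lfloor \log_k N \rfloor$ satisfies $k^r \le N < k^{r+1}$, the power $k^i$ is at most $N$ precisely when $0 \le i \le r$, and strictly exceeds $N$ once $i \ge r+1$. Consequently $\min(k^i, N) = k^i$ for $0 \le i \le r$, while $\min(k^i, N) = N$ for $r+1 \le i \le N$. One should also note that the inequality $r \le k^r \le N$ (valid since $k \ge 2$) guarantees $r \le N$, so that the block of indices $r+1, r+2, \dots, N$ is genuinely the complement of $0, 1, \dots, r$ within $\{0, 1, \dots, N\}$ and no terms are double-counted or omitted.

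Next I would split the sum according to this cut-off and evaluate each piece separately:
\[
\sum_{i=0}^{N} \min(k^i, N) = \sum_{i=0}^{r} k^i + \sum_{i=r+1}^{N} N.
\]
The first sum is the geometric series $1 + k + k^2 + \cdots + k^r = (k^{r+1}-1)/(k-1)$, and the second is a sum of $N - r$ identical terms equal to $N$, contributing $N(N-r)$. Adding the two pieces gives precisely $(k^{r+1}-1)/(k-1) + N(N-r)$, as claimed.

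I do not expect a genuine obstacle in this argument; the only points requiring a small amount of care are the exact cut-off between the two regimes (controlled entirely by the inequalities $k^r \le N < k^{r+1}$ defining $r$) and the degenerate case $N = 1$, where $r = 0$ and both the series and the second block reduce correctly to the value $2$. Thus the proof is a direct computation once the characterization $\gamma_i(w) = \min(k^i, N)$ is invoked.
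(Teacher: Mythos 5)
Your proof is correct and is essentially identical to the paper's: both substitute $\gamma_i(w) = \min(k^i, N)$ from the definition, split the sum at the cutoff $i = r$ determined by $k^r \le N < k^{r+1}$, and evaluate the geometric series plus the block of $N-r$ constant terms. The extra care you take about the cutoff and the degenerate case is fine but not needed beyond what the paper does.
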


\begin{proof}
	We have
	\begin{align*}
	\sum_{0 \leq i \leq N} \gamma_i (w) &=
	\sum_{0 \leq i\leq N}  \min(k^i,N)  \\
	& = \sum_{0 \leq i \leq r} k^i + \sum_{r < i \leq N} N \\
	& = {{k^{r+1} -1 } \over {k-1}} + N(N-r).
	\end{align*}
\end{proof}

Counting the exact
number of generalized de Bruijn words of length $N$ appears
to be a difficult task.
Figures for small $N$ can be obtained by a computer search,
as in Table \ref{tab1}.
The second author has computed these numbers up to $N = 64$ (see Table \ref{tab63} for a possible independent verification). 
\begin{table}[H]
	\begin{center}
		\begin{tabular}{c c c c c}
			\begin{tabular}{c|r}
				length & number
				\\
				\hline
				32 & 2 048 \\
				33 & 4 096 \\
				34 & 3 840 \\
				35 & 7 040 \\
				36 & 13 744 \\
				37  & 28 272 \\
				38  & 54 196 \\
				39  & 88 608 \\
				40  & 160 082 \\
				41  & 295 624 \\
				42  & 553 395 \\
			\end{tabular}
			&
			&
			\begin{tabular}{c|r}
				length & number
				\\
				\hline
				43  & 940 878 \\
				44  & 1 457 197 \\
				45  & 2 234 864 \\
				46  & 3 302 752 \\
				47  & 4 975 168 \\
				48  & 7 459 376 \\
				49  & 10 347 648 \\
				50  & 13 841 408 \\
				51  & 17 696 256 \\
				52  & 23 404 848 \\
				53  & 30 918 336 \\
			\end{tabular}
			&
			&
			\begin{tabular}{c|r}
				length & number
				\\
				\hline
				54  & 36 137 280 \\
				55  & 38 730 752 \\
				56  & 41 246 208 \\
				57  & 50 774 016 \\
				58  & 60 764 160 \\
				59  & 62 619 648 \\
				60  & 70 057 984 \\
				61  & 59 768 832 \\
				62  & 88 080 384 \\
				63  & 134 217 728 \\
				64  & 268 435 456  
			\end{tabular}
		\end{tabular}
	\end{center}
	\caption{Numbers of longer binary generalized de Bruijn words}
	\label{tab63}
\end{table}
Except in a few simple cases, we do not even know an exact asymptotic expression.
For example, if $N = k^n$, then it follows from known results
\cite{Aardenne-Ehrenfest&de.Bruijn:1951} that this
number is $(k!)^{k^{n-1}}/k^n$, counted up to cyclic shift.
Some loose bounds could be obtained from \cite{Maurer:1992}, keeping in mind, however, that we are interested in circuits visiting all vertices, not just arbitrary circuits.
Precise numbers seem to be relatively easily computable for $N = k^n \pm 1$, and possibly also for $N = k^n \pm 2$. In particular, the number of binary generalized de Bruijn words of length $N = 2^n \pm 1$ is twice the number of such words of length $2^n$; see the discussion in \cite[p.~202]{Fredricksen:1982}. The considerations, however, quickly become involved. It can be verified by a computer search, for example, that the formula for cycles of length  $2^n -2$ given in \cite[p.~203]{Fredricksen:1982} is wrong. Similarly, the number of cycles of length $k^n\pm 1$ we gave in the final comments of our conference paper is also wrong for $k > 2$. For example, computer search shows that the number of ternary generalized de Bruijn words of lengths 8 and 10 are 36 and 108, respectively, while the number of ternary (generalized) de Bruijn words of length 9 is 24. We therefore leave this question open for further research.  

\section*{Acknowledgments}

We thank the anonymous referees for helpful comments and suggestions.

\bibliographystyle{elsarticle-num}
\bibliography{abbrevs,conj2}

\end{document}